\newtheorem{theorem}{Theorem}
\newtheorem{definition}{Definition}
\newenvironment{proof}{\noindent\textbf{Proof:}}{\hfill$\Box$}
\title{A Scalable Byzantine Grid}
\author{Alexandre Maurer \and S\'{e}bastien Tixeuil}
\date{
UPMC Sorbonne Universit\'{e}s, France
\\ \texttt{Alexandre.Maurer@lip6.fr}
\\ \texttt{Sebastien.Tixeuil@lip6.fr}
}
\begin{document}

\maketitle              

\begin{abstract}
Modern networks assemble an ever growing number of nodes.
However, it remains difficult to increase the number of channels per node, thus the maximal degree of the network may be bounded. This is typically the case in grid topology networks, where each node has at most four neighbors. In this paper, we address the following issue: if each node is likely to fail in an unpredictable manner, how can we preserve some global reliability guarantees when the number of nodes keeps increasing unboundedly ?

To be more specific, we consider the problem or reliably broadcasting information on an asynchronous grid in the presence of Byzantine failures -- that is, some nodes may have an arbitrary and potentially malicious behavior.
Our requirement is that a constant fraction of correct nodes remain able to achieve reliable communication.
Existing solutions can only tolerate a fixed number of Byzantine failures if they adopt a worst-case placement scheme. Besides, if we assume a constant Byzantine ratio (each node has the same probability to be Byzantine), the probability to have a fatal placement approaches 1 when the number of nodes increases, and reliability guarantees collapse.

In this paper, we propose the first broadcast protocol that overcomes these difficulties.
First, the number of Byzantine failures that can be tolerated (if they adopt the worst-case placement) now increases with the number of nodes.
Second, we are able to tolerate a constant Byzantine ratio, however large the grid may be. In other words, the grid becomes scalable. This result has important security applications in ultra-large networks, where each node has a given probability to misbehave.

\noindent\textbf{Keywords:} {Byzantine failures, Networks, Broadcast, Fault tolerance, Distributed computing, Protocol, Random failures}

\end{abstract}

\section{Introduction}

As modern networks grow larger and larger, their components become more likely to fail. Indeed, some nodes can be subject to crashes, attacks, bit flips, etc.
Many models of failures and attacks have been studied so far, but the most general one is the \emph{Byzantine} model \cite{LSP82j}: the failing nodes behave arbitrarily. In other words, we must anticipate the most malicious strategy they could adopt. This encompasses all other possible types of failures, and has important security applications.

In this paper, we study the problem of reliably broadcasting information in a network despite the presence of Byzantine failures. This is a difficult problem, as a single Byzantine node, if not neutralized, can potentially lie to the entire network. Our objective is to design a broadcast protocol that prevent or limit the diffusion of malicious messages.

\paragraph{Related works.}

Many Byzantine-robust protocols are based on \emph{cryptography} \cite{CL99c,DFS05c}: the nodes use digital signatures or certificates. Therefore, the correct nodes can verify the validity of received informations and authenticate the sender across multiple hops. However, this approach may not be as general as we want, as the malicious nodes are supposed to ignore some cryptographic secrets: therefore, their behavior is not \emph{completely} arbitrary.
Besides, cryptographic operations require the presence of a trusted infrastructure that deals with public and private keys: if this infrastructure fails, the whole network fails. Yet, we would like to consider that \emph{any} component can fail. For these reasons, we focus on cryptography-free solutions.

Cryptography-free solutions have first been studied in completely connected networks~\cite{LSP82j,AW98b,MMR03j,MRRS01c,MS03j}: a node can directly communicate with any other node, which implies the presence of a channel between each pair of nodes. Therefore, these approaches are hardly scalable, as the number of channels per node can be physically limited. We thus study solutions in multihop networks, where a node must rely on other nodes to broadcast informations.

A notable class of algorithms tolerates Byzantine failures with either space~\cite{MT07j,NA02c,SOM05c} or time~\cite{MT06cb,DMT11cb,DMT11j,DMT10cd,DMT10ca} locality. Yet, the emphasis of space local algorithms is on containing the fault as close to its source as possible. This is only applicable to the problems where the information from remote nodes is unimportant (such as vertex coloring, link coloring or dining philosophers). Also, time local algorithms presented so far can hold at most one Byzantine node and are not able to mask the effect of Byzantine actions. Thus, the local containment approach is not applicable to reliable broadcast.

It has been shown that, for agreement in the presence of up to $k$ Byzantine nodes, it is necessary and sufficient that the network is $(2k+1)$-connected, and that the number of nodes in the system is at least $3k+1$ \cite{D82j}. Also, this solution assumes that the topology is known to every node, and that nodes are scheduled according to the synchronous execution model.
Both requirements have been relaxed \cite{NT09j}: the topology is unknown and the scheduling is asynchronous. Yet, this solution retains $2k+1$ connectivity for reliable broadcast and $k+1$ connectivity for detection (the nodes are aware of the presence of a Byzantine failure). In sparse networks such as a grid (where a node has at most four neighbors), both approaches can cope only with a single Byzantine node, independently of the size of the grid. 

Another existing approach is based, not on connectivity, but on the fraction of Byzantine neighbors per node. Broadcast protocols have been proposed for nodes organized on a grid \cite{K04c,BV05c}. However, the wireless medium typically induces much more than four neighbors per node, otherwise the broadcast does not work. Both approaches are based on a local voting system, and perform correctly if every node has strictly less than a $1/4$ fraction of Byzantine neighbors. This result was later generalized to other topologies \cite{PP05j}, assuming that each node knows the global topology. Again, in weakly connected networks, this constraint on the proportion of Byzantine nodes in any neighborhood may be difficult to assess.

All aforementioned results rely on strong connectivity and Byzantine proportions assumptions in the network. In other words, tolerating more Byzantine failures requires to increase the number of channels per node, which may be difficult or impossible when the size of the network increases. 
To overcome this difficulty, an alternate approach has been proposed \cite{CtrZ}. The idea is to make a small concession to the problem: we now aim at reliable communication, not between \emph{all} correct nodes, but between \emph{most} correct nodes.
In other words, we now accept that a small minority of correct nodes can be fooled by the Byzantine nodes.
This is not unrealistic, as we already accepted the idea that some nodes can fail unpredictably (being hit by Byzantine failures).
This approach has been shown very efficient when the Byzantine failures are randomly distributed. This is the case, for instance, in a peer-to-peer overlay (the malicious nodes do not choose their localization when they join the overlay), or if we consider that each node has a given probability of failure.

All existing approaches have the same weak point: if the number of channels per node (degree) is bounded, a fixed number of Byzantine nodes can destabilize the whole network.
Indeed, if they adopt a sufficiently close formation, they can pretend to be the source node, and lie to \emph{any} other node -- thus, we cannot even ensure that \emph{most} correct nodes communicate reliably. Besides, if each node has a given probability to be Byzantine, the 
probability that such a fatal formation exists approaches $1$ when the number of nodes increases.
Therefore, these approaches are hardly scalable when the maximal degree is bounded.

\paragraph{Our contribution.}

In this paper, we propose the first broadcast protocol that overcomes these difficulties on a specific degree-bounded topology: the grid, where each node has at most four neighbors. For this protocol, the diameter of the grid can only have discrete values, but can be as large as we want. 
As in \cite{CtrZ}, our requirement is that a constant fraction of correct nodes achieves reliable communication. We show that the number of Byzantine failures that can be tolerated (if they adopt the worst-case placement) increases with the number of nodes: in other words, for the first time, this number is not limited by the maximal degree or the connectivity of the network.
Besides, if we assume a constant rate of Byzantine failures (each node has the same probability to be Byzantine), the expected reliable fraction of the network is always the same, however large the grid may be. This may have applications in large-scale networks, where each node has a given probability to fail: we can now increase the size of the network indefinitely, and yet preserve the same reliability guarantees.

The paper is organized as follows.
In Section~\ref{sec_A}, we describe the network topology (a sequence of grid networks that may be as large as we want) and the broadcast protocol to execute on it.
In Section~\ref{sec_B}, we adopt the point of view of an omniscient observer that knows the positions of Byzantine nodes, and give a methodology to determine a \emph{reliable node set} - that is, a set of nodes that always communicate reliably, in any possible execution.
At last, in Section~\ref{sec_C}, we use the aforementioned methodology to prove the claims.

\section{Our algorithm}
\label{sec_A}

In this section, we define a class of grid networks and the broadcast protocol to execute on.

\subsection{Hypotheses}

The network is constituted by a set of processes, called \emph{nodes}.
Some pairs of nodes are linked by a communication channel -- we call them \emph{neighbors} -- and can exchange messages.
Each node of the network has a unique identifier, which is its position on the grid.
A node, upon receiving a message from a neighbor, knows the identifier of this neighbor. The network is asynchronous: any message sent is eventually received, but it can be at any time.

\subsection{Network topology}

Let $N = 10$.
Our broadcast protocol is defined for the networks $G_k$, $\forall k \geq 1$,
$G_k$ being a $N^k \times N^k$ \emph{grid}.
These networks may be as large as needed.

\begin{definition}[Grid network]
\label{defgrid}
An $M \times M$ grid is a network such that:
\begin{itemize}
\item Each node has a unique identifier $(i,j)$ with
$0 \leq i < M$ and $0 \leq j < M$.
\item Two nodes $(i_{1},j_{1})$ and $(i_{2},j_{2})$ are neighbors if and only if one of these two conditions is satisfied:
\begin{itemize}
\item $i_{1} = i_{2}$ and $ | j_{1}-j_{2} |  = 1$.
\item $j_{1} = j_{2}$ and $ | i_{1}-i_{2} | = 1$.
\end{itemize}
\end{itemize}
\end{definition}

According to our hypotheses, each node knows its identifier $(i,j)$ on the grid, and the identifier $(i,j)$ of its neighbors.
Each node of $G_k$ also knows $N$ and $k$.

\subsection{Informal description of the protocol}

Our broadcast protocol (BP) is defined by induction: we use an existing BP on $G_1$, then use the BP of $G_k$ to define the BP of $G_{k+1}$. The idea is to associate a cluster of $G_{k+1}$ to each node of $G_k$.
Let $G(p)$ be the cluster associated to a node $p$ (we call it \emph{macro-node}).
This is illustrated in Figure~\ref{fig:cluster}.
The goal of a macro-node $G(p)$ is to simulate the behavior of $p$, so that we obtain a macroscopic BP in $G_{k+1}$. Then, when a node $u$ of $G(p)$ wants to broadcast a message $m$ in $G_{k+1}$:
\begin{enumerate}
\item First, $u$ broadcasts $m$ in $G(p)$ with a local BP.
\item Then, $G(p)$ broadcasts $m$ in $G_{k+1}$ with the macroscopic BP.
\end{enumerate}

\begin{figure*}
\begin{center}
\includegraphics[width=\textwidth]{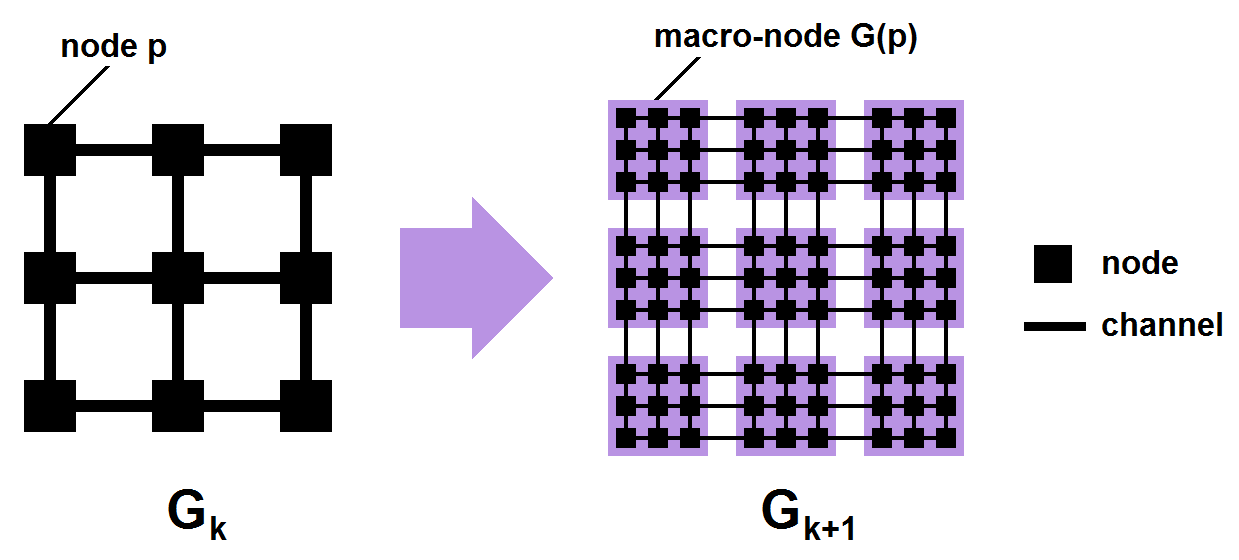}
\caption{Association of a macro-node of $G_{k+1}$ to each node of $G_k$} 
\label{fig:cluster}
\end{center}
\end{figure*}

The interest of this inductive definition lies in its Byzantine-resilience properties. These properties are studied in Section~\ref{sec_corr}.

\subsection{Complete description of the protocol}

\label{comp_desc}
The BP executed on $G_1$ is the \emph{Control Zone Protocol} (CZP) proposed in \cite{CtrZ}. Let us give the methodology to construct the BP of $G_{k+1}$ with the BP of $G_k$. For this purpose, we first give an algorithm to communicate between two macro-nodes (\emph{macro-channel}), then use it to construct the macroscopic BP.

\paragraph{Macro-node.}
To each node $p$ of $G_k$, we associate a cluster $G(p)$ of $G_{k+1}$, called macro-node.
Let $(i,j)$ be the identifier of $p$. Then, $G(p)$ is the $N \times N$ grid such that the node $(0,0)$ of $G(p)$ corresponds to the node $(Ni,Nj)$ of $G_{k+1}$.

\paragraph{Macro-channel.}
Let $p$ and $q$ be two neighbor nodes in $G_{k}$. We give an algorithm to tranfer messages from $G(p)$ to $G(q)$, as if they were two neighbor nodes linked by a channel.

First, we execute the $CZP$ on both $G(p)$ and $G(q)$, to enable local broadcast inside each macro-node.
The following algorithm enables to send a message $m$, known by the nodes of $G(p)$, to the nodes of $G(q)$.
Let $Border(p)$ (resp. $Border(q)$) be the set of nodes of $G(p)$ (resp. $G(q)$) having a neighbor in $G(q)$ (resp. $G(p)$).

\begin{enumerate}
\item The nodes of $Border(p)$ send $m$ to their neighbor in $Border(q)$.
\item The nodes of $Border(q)$, upon receiving $m$ from their neighbor in $Border(q)$, broadcast $m$ in $G(q)$ with the $CZP$.
\item The nodes of $G(q)$, upon receiving strictly more than $N/2$ distinct messages $(v_i,m)$ trough the $CZP$ with $v_i \in Border(q)$, accept $m$.
\end{enumerate}

We associate a dynamic set $Sen_q$ to each node of $G(p)$ (storing the message to send), and a dynamic set $Rec_p$ to each node of $G(q)$ (storing the messages received).
We execute this algorithm for each pair of neighbor macro-nodes. This mechanism is illustrated in Figure~\ref{fig:passing}.

\paragraph{Macroscopic BP.} For each node $p$ of $G_k$, all nodes of $G(p)$ execute the same algorithm than $p$, with the two following modifications:
\begin{enumerate}
\item When the algorithm requires to send a message $m$ to a neighbor $q$, add $m$ to $Sen_q$.
\item When a message $m$ is added to the set $Rec_q$, consider that $m$ was received from $q$.
\end{enumerate}
Now, let $s$ be a node of $G(p)$ that wants to broadcast a message $m$ in $G_{k+1}$. First, $s$ broadcasts $(s,m)$ in $G(p)$ with the CZP. Then, upon receiving $(s,m)$, the nodes of $G(p)$ broadcast $(s,m)$ with the macroscopic $BP$. Thus, the nodes receiving $(s,m)$ know that $s$ broadcast $m$: we now have a BP on $G_{k+1}$.

\begin{figure*}
\begin{center}
\includegraphics[width=\textwidth]{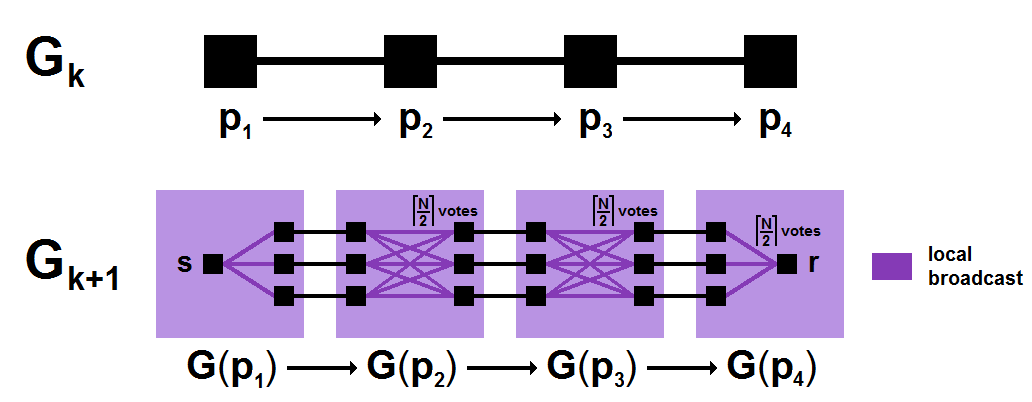}
\caption{Principle of the protocol} 
\label{fig:passing}
\end{center}
\end{figure*}

\section{Construction of a reliable node set}
\label{sec_B}
\label{sec_corr}

In this section, we now assume that some nodes are Byzantine, and behave arbitrarily instead of following the aforementioned protocol.
We adopt the point of view of an omniscient external observer, knowing the positions of Byzantine nodes, and give a methodology to determine a \emph{reliable node set} - that is, a set of nodes that communicate reliably in any possible execution.
This methodology is used in Section~\ref{sec_C} to prove the claims.
Notice that we never require that a node determines such a set: this is just a global view of the system.

\paragraph{Notion of reliable node set.}

The nodes following the aforementioned protocol are called \emph{correct}. The correct nodes do not know the positions of Byzantine nodes.

\begin{definition} [Reliable node set]
\label{def_rns}
For a given broadcast protocol (BP), a set of correct nodes is \emph{reliable} if, for each pair of nodes $s$ and $r$ of this set:
\begin{enumerate}
\item If $s$ broadcasts $m$, $r$ eventually accepts $(s,m)$.
\item If $r$ accepts $(s,m)$, $r$ necessarily broadcast $m$.
\end{enumerate}
\end{definition}

In other words, a reliable node set behaves like a network without Byzantine failures.
The item (1) guarantees that the nodes always manage to communicate.
The item (2) guarantees that no node of the reliable set can be fooled - for instance, if a Byzantine node broadcasts $(s,m')$ to make the network believe that $s$ broadcast $m'$.

\paragraph{Construction of a reliable node set.}

Let $Corr$ be a set of correct nodes of $G_k$. Let us define a function $Rel_k$ such that $Rel_k(Corr)$ returns a reliable node set for our BP. For this purpose, we first introduce some new elements.

In \cite{CtrZ}, we gave a methodology to determine a reliable node set for the CZP on an $N \times N$ grid, for a given set $Corr_0$ of correct nodes. Let $Rel_{CZP}$ be a function such that $Rel_{CZP}(Corr_0)$ returns a reliable node set for the CZP.

At last, we introduce the notion of \emph{correct macro-node}.
In broad outline, a correct macro-node behaves like a correct node in the macroscopic BP. This intuitive idea is the key element of the next theorem.

\begin{definition}[Correct macro-node]
\label{def_macrocorr}
Let there be an $N \times N$ grid with a distribution $Corr_0$ of correct nodes.
This grid (or macro-node) is said \emph{correct} if each side of the grid (up, down, right and left), among its $N$ nodes, has strictly more than $3N/4$ nodes in $Rel_{CZP}(Corr_0)$.
\end{definition}

The underlying idea of this definition is the following: the reliable node sets of two adjacent correct macro-nodes are always connected by a majority of channels (strictly more than $N/2$). Therefore, the messages exchanged between these two reliable sets always receive a majority of votes. This idea is illustrated in Figure~\ref{fig:macrocorr}, and used in the proof below.

\begin{figure*}
\begin{center}
\includegraphics[width=\textwidth]{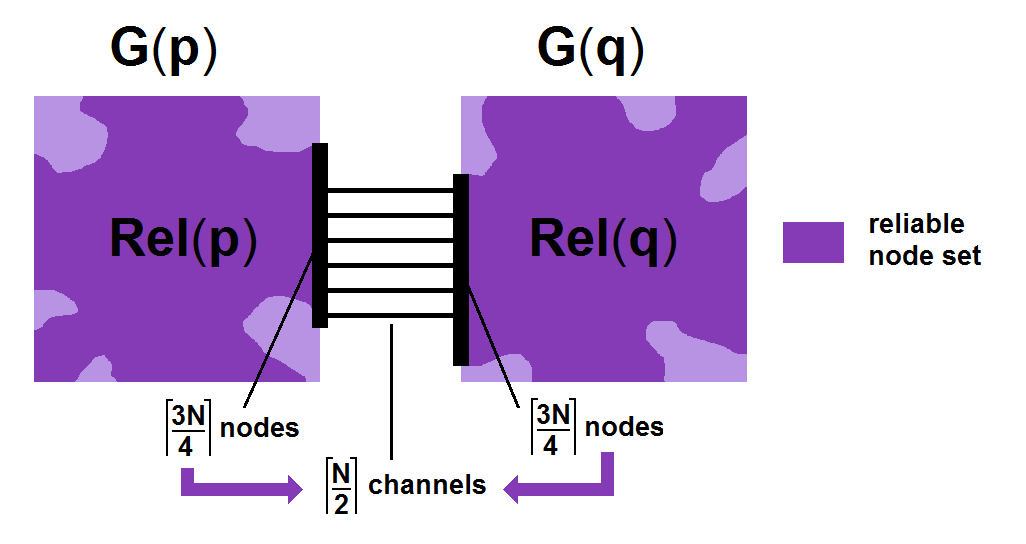}
\caption{Reliable communication between $2$ correct macro-nodes} 
\label{fig:macrocorr}
\end{center}
\end{figure*}

\newpage

We can now define the function $Rel_k$ by induction, $\forall k \geq 1$:
\begin{itemize}
\item $Rel_1 = Rel_{CZP}$ 
\item $ Rel_{k+1}(Corr) = \bigcup_{p \in Rel_k(Corr')} Rel_{CZP}(Corr(p))$, where \dots 
\begin{itemize}
\item $Corr$ is a distribution of correct nodes on $G_{k+1}$.
\item $Corr(p)$ is the corresponding distribution on $G(p)$.
\item $Corr'$ is the set of nodes $p$ of $G_k$ such that $G(p)$ is a correct macro-node.
\end{itemize}
\end{itemize}
In the following, we refer to $Rel_{CZP}(Corr(x))$ by $Rel(x)$.

\begin{theorem}
\label{thmain}
$\forall k \geq 1$, if $Corr$ is a distribution of correct nodes on $G_k$, then $Rel_k(Corr)$ is a reliable node set for our BP.
\end{theorem}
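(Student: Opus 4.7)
The plan is to induct on $k$. The base case $k=1$ is immediate since $Rel_1=Rel_{CZP}$, so the claim reduces to the correctness of the CZP established in \cite{CtrZ}. For the inductive step I would split the argument into a macro-channel lemma and a faithful-simulation observation, and then invoke the induction hypothesis on $G_k$ applied to the correct-macro-node set $Corr'$.

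The macro-channel lemma I would prove states that whenever $p,q$ are neighbours in $G_k$ and both $G(p),G(q)$ are correct macro-nodes, the macro-channel between them realises a reliable point-to-point link between $Rel(p)$ and $Rel(q)$: if every node of $Rel(p)$ puts $m$ in $Sen_q$ then every node of $Rel(q)$ eventually accepts $m$ into $Rec_p$, and conversely any $m$ that a node of $Rel(q)$ accepts was first placed in $Sen_q$ by all of $Rel(p)$. The combinatorics is forced by Definition~\ref{def_macrocorr}: each facing border has strictly more than $3N/4$ reliable nodes out of $N$, so by inclusion-exclusion strictly more than $N/2$ border-to-border channels connect two reliable endpoints, while on either side strictly fewer than $N/4$ border nodes lie outside the reliable set. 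A reliable member of $Border(q)$ can be tricked into CZP-broadcasting a spurious $(v,m)$ only via a non-reliable neighbour in $Border(p)$, so the total number of forgeable witnesses is strictly less than $N/4+N/4=N/2$ and fails the $>N/2$ acceptance threshold; symmetrically, a message that all of $Rel(p)$ genuinely forward yields strictly more than $N/2$ honest witnesses and is accepted.

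With the macro-channel lemma in hand, I would observe that CZP correctness inside a correct macro-node keeps the members of $Rel(p)$ synchronised on the simulated state of $p$'s algorithm, and the macro-channels make them exchange messages with neighbouring reliable sets exactly as a correct $p$ would over real channels to its neighbours in $Corr'$. Consequently the macroscopic BP on $G_{k+1}$, projected onto $\{Rel(p)\}_{p \in Corr'}$, is indistinguishable from an execution of the BP of $G_k$ with correct set $Corr'$. The induction hypothesis then delivers reliability of $Rel_k(Corr')$ for that BP, and a final application of CZP inside each $G(p)$ lifts every macroscopic broadcast or acceptance down to every node of $Rel(p)=Rel_{CZP}(Corr(p))$, yielding reliability of $Rel_{k+1}(Corr)=\bigcup_{p \in Rel_k(Corr')} Rel(p)$ as required by Definition~\ref{def_rns}.

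The hard part will be the non-forgery direction of the macro-channel lemma: one must rule out that Byzantine nodes inside $G(p)$, inside $G(q)$, or at their shared border collude, under adversarial asynchronous scheduling, to assemble more than $N/2$ distinct CZP witnesses $(v,m)$ in $G(q)$ for a message $m$ that no reliable border node of $G(p)$ ever legitimately forwarded. Everything else --- the base case, the consistency of the simulated state inside a correct macro-node, and the final appeal to the induction hypothesis --- is essentially bookkeeping once the $>3N/4$ border threshold has been converted into the witness counts above.
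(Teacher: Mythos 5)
Your proposal is correct and follows essentially the same route as the paper: induction on $k$, the $>3N/4$ border condition of Definition~\ref{def_macrocorr} converted into a $>N/2$ majority of reliable-to-reliable border channels (used for both the delivery and the non-forgery directions), and a reduction of the $G_{k+1}$ execution restricted to correct macro-nodes to a virtual execution of the $G_k$ protocol on $Corr'$, to which the induction hypothesis is applied. The one point the paper makes explicit that you gloss over as bookkeeping is that the protocol is \emph{accumulative} (its actions depend only on the sets of messages received, not on their order), which is what actually justifies your claim that the members of $Rel(p)$ remain consistent on the simulated state of $p$ under adversarial asynchronous scheduling.
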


\begin{proof} The main idea of the proof is to show an equivalence between the execution on $G_{k+1}$ and a virtual execution on $G_k$ (this, of course, does not mean that $G_k$ must actually exist for $G_{k+1}$ to work).

The proof is by induction. The property is true at rank $1$ by definition. Now, let us suppose that the property is true at rank $k$, and show that it is true at rank $k+1$.
Let $Corr$ be a distribution of correct nodes on $G_{k+1}$, and let $s$ and $r$ be two nodes of $Rel_{k+1}(Corr)$. Let us suppose that $s$ broadcasts $m$ in $G_{k+1}$. Then, to show that $Rel_{k+1}(Corr)$ is a reliable node set, we show that the items (1) and (2) of Definition~\ref{def_rns} are satisfied.

\begin{enumerate}
\item We call \emph{accumulative} a distributed algorithm where each node holds a given number of dynamic sets $S_1, S_2, S_3 \dots$, can only add elements to these sets ($S_i \leftarrow S_i \cup \{x\}$), and eventually executes an action when a given collection of elements has joined these sets: $(X_1 \subseteq S_1) \land (X_2 \subseteq S_2) \land \dots$. The CZP is accumulative, and so is our BP, as it is an inductive combination of accumulative algorithms. In other words, the order of reception of messages is unimportant in our BP.

Let $p$ and $q$ be the nodes of $G_k$ such that $s$ belongs to $G(p)$ and $r$ belongs to $G(q)$. By definition of $Rel_{k+1}$, $p$ and $q$ belong to $Rel_k(Corr')$. Let us suppose that $Corr'$ is a distribution of correct nodes on $G_k$. Then, $Rel_k(Corr')$ is a reliable node set on $G_k$. Therefore, if $p$ broadcasts $(s,m)$, there exists a sequence of message receptions such that $q$ eventually accepts $(s,m)$.
Let $(R_1,R_2,\dots,R_M)$ be this sequence, $R_i$ being a triplet $(q_i,m_i,p_i)$ such that $q_i$ receives $m_i$ from $p_i$, with $p_1 = p$ and $q_M = q$. Let us prove the following property $\mathcal{P}_i$ by induction, $\forall i \in \{1,\dots,M\}$: all the nodes of $Rel(q_i)$ eventually add $m_i$ to $Rec_{p_i}$.

\begin{itemize}

\item First, let us show that $\mathcal{P}_1$ is true. According to our BP, $s$ initially broadcasts $(s,m)$ in $G(p)$.
Therefore, as $p = p_1$, all the nodes of $Rel(p_1)$ eventually accept $(s,m)$. Then, as they execute the same alogorithm than $p_1$, they add $m_1$ to their set $Sen_{q_1}$.

Let $Border(q_1)$ be the set of nodes of $G(q_1)$ having a neighbor in $G(p_1)$.
As $G(q_1)$ and $G(p_1)$ are two correct macro-nodes, according to Definition~\ref{def_macrocorr}, strictly more than $N/2$ nodes of
$Rel(p_1)$ have a neighbor in $Rel(q_1)$.
Therefore, strictly more than $N/2$ nodes of $Border(q_1) \cap Rel(q_1)$ eventually receive $m_1$, and broadcast it in $G(q_1)$.
So all the nodes of $Rel(q_1)$ eventually receive strictly more than $N/2$ messages
$(v_x,m_1)$ with $v_x \in Border(q_1)$ and add $m_1$ to $Rec_{p_1}$.
Thus, $\mathcal{P}_1$ is true.

\item Now, let us suppose that $\mathcal{P}_j$ is true $\forall j \leq i$.
Then, as the order of reception of messages is unimportant,
all the nodes of $Rel(p_{i+1})$ eventually behave as $p_{i+1}$, and add $m_{i+1}$ to $Sen_{q_{i+1}}$.

Thus, by a perfectly similar demonstration, $\mathcal{P}_{i+1}$ is true.

\end{itemize}

Then , as $r \in Rel(q)$, according to $\mathcal{P}_M$: $r$ eventually receives the same messages as $q = q_M$ and accepts $(s,m)$. Thus, the item (1) of Definition~\ref{def_rns} is satisfied. This is illustrated in Figure~\ref{fig:proof1}.

\begin{figure*}
\begin{center}
\includegraphics[width=\textwidth]{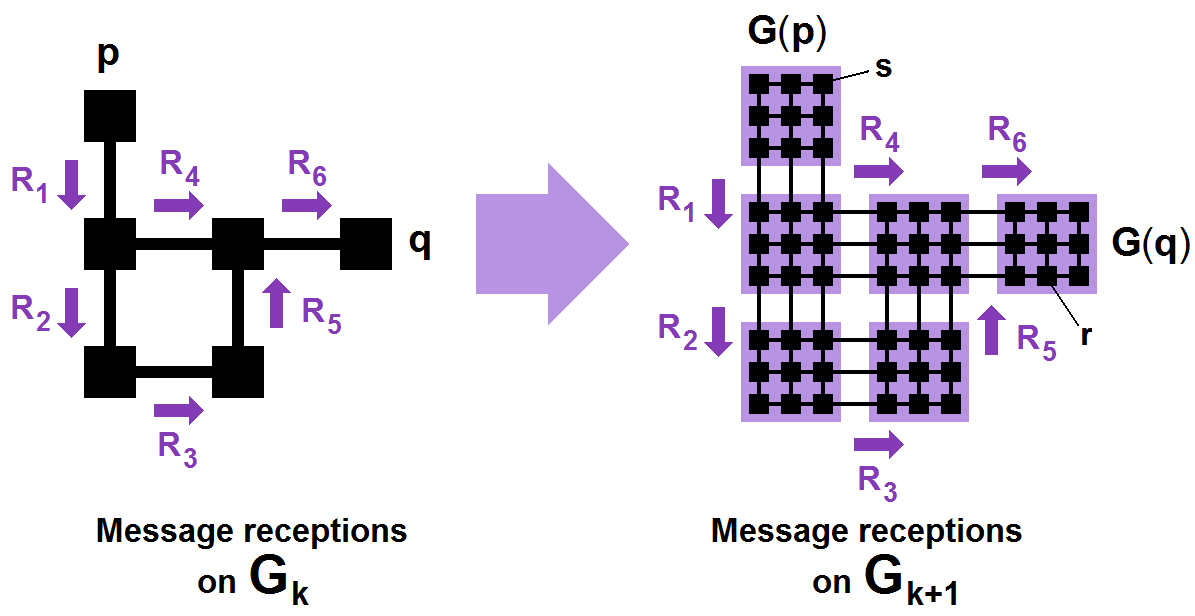}
\caption{Illustration of the proof (1) : what occurs in $Rel_k(Corr')$ eventually occurs in $Rel_{k+1}(Corr)$} 
\label{fig:proof1}
\end{center}
\end{figure*}

\item The proof is by contradiction. Let us suppose the opposite: $r$ accepts a message $(s,m)$, yet $s$ did not broadcast $m$. Let $p_0$ be the node of $G_k$ such that $r \in Rel(p_0)$. If we also have $s \in Rel(p_0)$, it is impossible that $r$ accepts $(s,m)$, as $Rel(p_0)$ is a reliable node set.
So $s$ necessarily belongs to another macro-node.
Similarly than above, let us suppose that $Corr'$ is a distribution of correct nodes on $G_k$.
Then, as $Rel_k(Corr')$ is a reliable node set on $G_k$, $r$ necessarily received a message that $p_0$ cannot receive in $G_k$. Let us show that this is impossible.

Let $u$ be the first node of $Rel_{k+1}(Corr)$ (possibly $r$), belonging to a macro-node $G(q)$, to receive a message $m'$ that $q$ cannot receive in $G_k$. Let $G(p)$ be the macro-node sending this message.
If $G(p)$ is not correct (in the sense of Definition~\ref{def_macrocorr}), then $p$ does not belong to $Corr'$, is assumed to be Byzantine on $G_k$, and can actually send $m'$ to $q$ -- so $G(p)$ is necessarily correct. 
It implies that $u$ received strictly more than $N/2$ messages $(v_i,m')$ with $v_i \in Border(q)$. As $G(p)$ and $G(q)$ are two correct macro-node, strictly more than $N/2$ nodes of $Rel(p)$ have a neighbor in $Rel(q)$. So at least one of the nodes $v_i$ belongs to $Rel(q)$ \emph{and} received $m'$ from a neighbor $v \in Rel(p)$.
As $Rel(p)$ is a reliable node set, the only possibility is that $v$ received a message that $p$ cannot receive in $G_k$. So $u$ is not the first node in this situation, which contradicts the initial statement.
Thus, the item (2) of Definition~\ref{def_rns} is satisfied. This is illustrated in Figure~\ref{fig:contrad}.

\begin{figure*}
\begin{center}
\includegraphics[width=\textwidth]{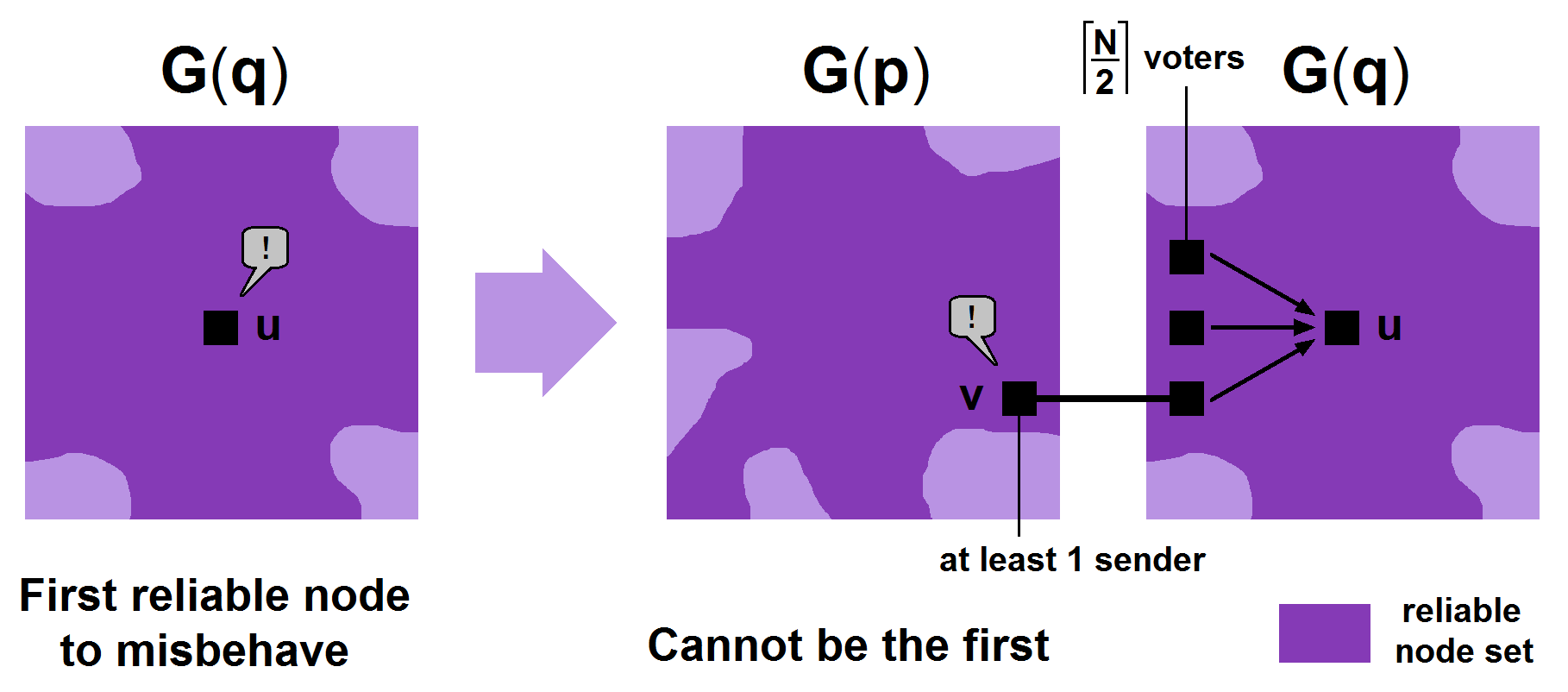}
\caption{Illustration of the proof (2) : a node of $Rel_{k+1}(Corr)$ cannot misbehave} 
\label{fig:contrad}
\end{center}
\end{figure*}

\end{enumerate}

\end{proof}

We now have a methodology to determine a reliable node set for a given distribution of Byzantine nodes on $G_k$, $\forall k \geq 1$. In the next section, we use this methodology to prove the claims.

\section{Proof of the claims}
\label{sec_C}

In this section, we finally prove the claims of the paper: the number of Byzantine failures that can be tolerated increases with the number of nodes (if they adopt the worst-case placement), and a constant rate of Byzantine failures can be tolerated, however large the grid may be.
As in \cite{CtrZ}, our requirement to \emph{tolerate} Byzantine failures is that a constant fraction of the network communicates reliably.

\subsection{Worst-case placement}

Let us give a minimal number of Byzantine failures that can be tolerated when they adopt an arbitrary placement (possibly the worst).

\begin{theorem}
$\forall k \geq 1$, on a grid $G_k$ with at most $2^{k-1}$ Byzantine failures (arbitrarily placed), the fraction of the network achieving reliable communication is at least $\displaystyle  1 - \frac{4}{N^2}$.
\end{theorem}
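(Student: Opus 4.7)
The plan is induction on $k$, with Theorem~\ref{thmain} supplying that $Rel_k(Corr)$ is itself the reliable node set, and the CZP analysis from \cite{CtrZ} serving as the numerical workhorse at each level. The base case $k=1$ is immediate: the BP on $G_1$ is the CZP, and for one Byzantine failure on an $N\times N$ grid \cite{CtrZ} gives $|Rel_{CZP}(Corr)|\geq N^2-4$, hence a reliable fraction at least $1-4/N^2$.

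For the inductive step, assume the property at rank $k$ and consider a placement of at most $2^k$ Byzantine nodes on $G_{k+1}$. I would first prove a \emph{macro-robustness lemma}: any $N\times N$ macro-node that contains at most one internal Byzantine is correct in the sense of Definition~\ref{def_macrocorr}. This is the only place where one must look inside the CZP black box: one checks that the bounded set of nodes that CZP removes around a lone Byzantine cannot concentrate enough on any single side of the macro-node to push the number of side-nodes in $Rel_{CZP}(Corr_0)$ down to $3N/4$ (for $N=10$ this means every side must retain at least $8$ reliable nodes). Granted this lemma, every macro-node that fails to be correct must absorb at least two of the $2^k$ Byzantines, so the number of bad macro-nodes is at most $2^{k-1}$. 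Treating bad macro-nodes as virtual Byzantine nodes on $G_k$ and invoking the inductive hypothesis at rank $k$ then yields $|Rel_k(Corr')|\geq N^{2k}\bigl(1-4/N^2\bigr)$.

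Finally I would aggregate using $Rel_{k+1}(Corr)=\bigcup_{p\in Rel_k(Corr')} Rel_{CZP}(Corr(p))$. Each macro-node excluded from $Rel_k(Corr')$ costs at most $N^2$ unreliable nodes, accounting for at most $4N^{2k}$ lost nodes in $G_{k+1}$. Each good macro-node $p\in Rel_k(Corr')$ contains at most one internal Byzantine and, by the CZP analysis, contributes $N^2$ minus a bounded constant; summed over all good macro-nodes this adds only an $O(2^k)$ correction, which is negligible compared with $4N^{2k}$. Dividing by $N^{2(k+1)}$ gives the announced reliable fraction $\geq 1-4/N^2$.

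The main obstacle is the macro-robustness lemma: it demands a geometric inspection of CZP from \cite{CtrZ} to rule out the pathological placement where a single adversarially located Byzantine piles all of its removed nodes onto one boundary side of the macro-node and breaks the $>3N/4$ threshold. Once that lemma is extracted, the remainder is routine bookkeeping that exploits the gap between the exponentially slow growth $2^k$ of the Byzantine budget and the polynomial growth $N^{2k}$ of the node count.
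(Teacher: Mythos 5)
Your proposal follows the paper's proof almost step for step: the same induction on $k$, the same exhaustive verification (feasible because $N=10$) that an $N\times N$ macro-node containing at most one Byzantine node is correct in the sense of Definition~\ref{def_macrocorr}, the same pigeonhole count showing that at most $2^{k-1}$ macro-nodes can fail to be correct, and the same reduction that treats those bad macro-nodes as virtual Byzantine nodes on $G_k$ before invoking the rank-$k$ hypothesis. Your identification of the macro-robustness lemma as the only place where one must open the CZP black box matches the paper exactly.

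The one step that does not hold as written is your final aggregation. You correctly observe that good macro-nodes containing a single leftover Byzantine each lose a bounded number of nodes, for an extra loss of order $2^k$, and you dismiss this as ``negligible compared with $4N^{2k}$.'' But the theorem asserts the fraction is at least exactly $1-4/N^2$, with zero slack: a total loss of $4N^{2k}+O(2^k)$ out of $N^{2(k+1)}$ nodes gives a fraction strictly below $1-4/N^2$, so any positive correction, however small, breaks the claimed inequality. The two loss terms cannot in fact be simultaneously maximal --- the $4N^{2k}$ term is attained only when every bad macro-node absorbs two Byzantines, in which case no Byzantine remains in a good macro-node and the correction vanishes; this is what the paper's parenthetical ``in this worst case, all these macro-nodes have only correct nodes'' is gesturing at. To make it rigorous, strengthen the induction hypothesis to a bound linear in the number $f$ of Byzantine nodes: the number of unreliable nodes on $G_k$ with $f\leq 2^{k-1}$ failures is at most $4f(N^2/2)^{k-1}$. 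Writing $b$ for the number of bad macro-nodes and $g$ for the leftover Byzantines in good ones, so that $2b+g\leq f$, the recursion gives a loss of at most $4b(N^2/2)^{k-1}N^2+4g$, and the difference $4f(N^2/2)^{k}-\bigl(8b(N^2/2)^{k}+4f-8b\bigr)=(4f-8b)\bigl((N^2/2)^{k}-1\bigr)\geq 0$ closes the induction; evaluating at $f=2^{k-1}$ recovers exactly the $1-4/N^2$ bound.
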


\begin{proof}
The proof is by induction. For $k = 1$, we can test all possible placements of a single Byzantine failure (as $N = 10$) and show that the property is true. Now, let us suppose that the property is true at rank $k$. Let there be $2^{k}$ Byzantine failures arbitrarily placed on $G_{k+1}$. Then, at most $2^{k-1}$ macro-nodes of $G_{k+1}$ contain more than $2$ Byzantine failures.
Again, by testing all possible cases, we can show that an $N \times N$ grid with at most $1$ Byzantine failure is always correct in the sense of Definition~\ref{def_macrocorr}.
So at most $2^{k-1}$ macro-nodes are not correct. Therefore, as the property is true at rank $k$, the reliable node set covers at least a $\displaystyle  1 - \frac{4}{N^2}$ fraction of macro-nodes (and in this worst case, all these macro-nodes have only correct nodes). Thus, according to the definition of $Rel_{k+1}$, the property is true at rank $k+1$. This is illustrated in Figure~\ref{fig:worst}.

\begin{figure*}
\begin{center}
\includegraphics[width=\textwidth]{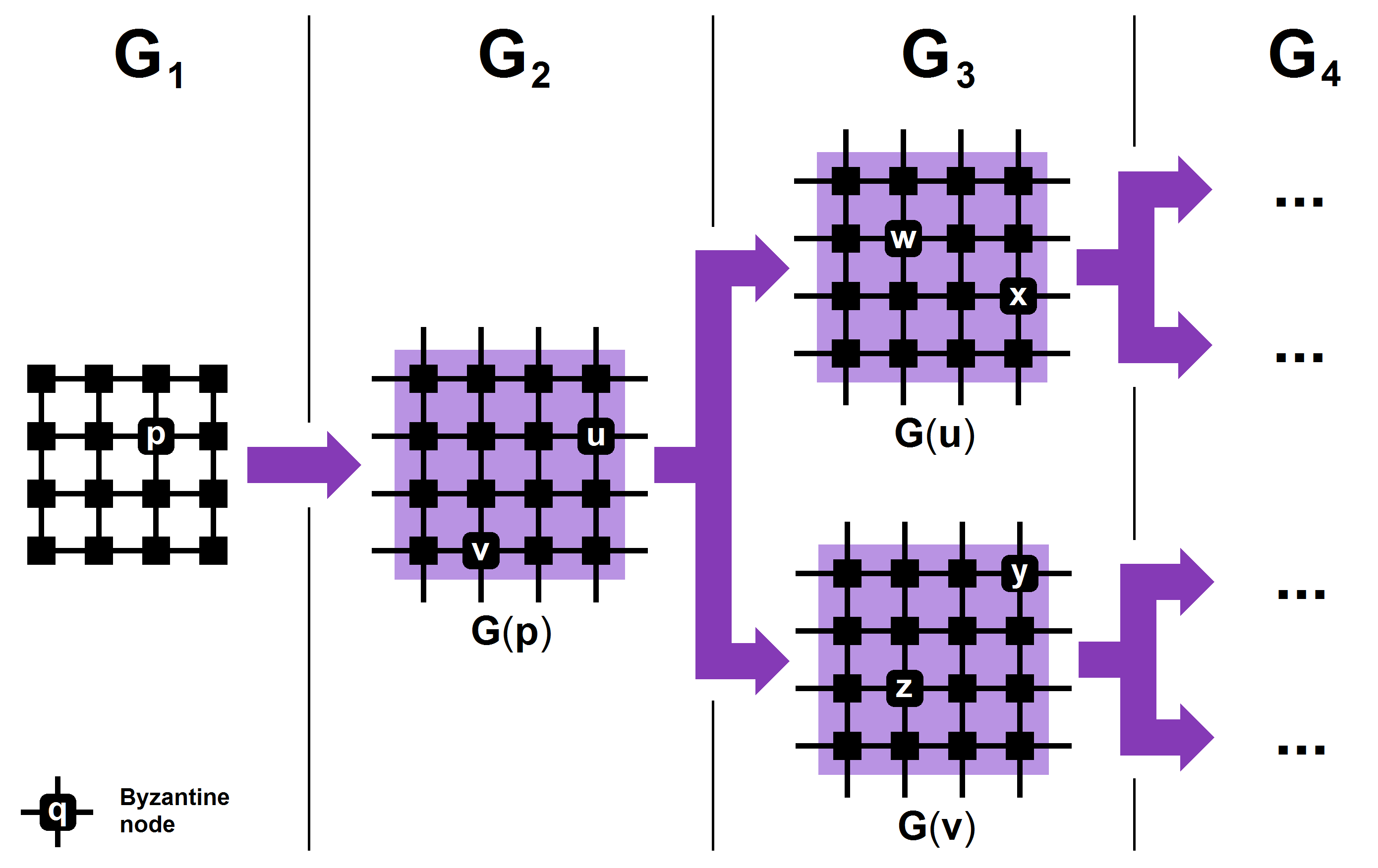}
\caption{Worst-case placement of $2^{k-1}$ Byzantine nodes on $G_k$} 
\label{fig:worst}
\end{center}
\end{figure*}

\end{proof}

So we can always tolerate $2^{k-1}$ failures on $G_k$. As the parameter $k$ sets the size of the grid, this number increases with the number of nodes.
To our knowledge, this is the first time that this number is not limited by the connectivity or the maximal degree of the network.

\subsection{Random distribution}

Let us assume a constant rate of Byzantine failures (each node has the same probability $\lambda$ to be Byzantine) and give the expected reliable fraction of the network.
Let $\mu = 1 - \lambda$ be the probability that a node is correct.

\begin{theorem} 
$\forall k \geq 1$, let $F_k(\mu)$ be the expected reliable fraction of $G_k$.
Then, if $\mu \geq 1 - 10^{-5}$, we have $ F_k(\mu) \geq 1 - 10^{-4}$.
\end{theorem}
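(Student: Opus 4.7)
The plan is to proceed by induction on $k$, propagating a quantitative "failure rate" that sharpens doubly-exponentially with each level. For the base case $k=1$, since $Rel_1 = Rel_{CZP}$, I would invoke the CZP analysis of \cite{CtrZ} and check $F_1(\mu) \geq 1 - 10^{-4}$ for $\mu \geq 1 - 10^{-5}$ by expanding in the number of Byzantine failures inside the $10 \times 10$ grid: zero-failure configurations contribute nothing to unreliability; one-failure configurations (total probability $\approx 10^{-3}$) contribute at most a $4/100$ unreliable fraction by the worst-case bound of the preceding theorem; and configurations with two or more failures occur with probability $O((1-\mu)^2) = O(10^{-10})$. Summing these contributions gives $1 - F_1(\mu) \lesssim 4 \cdot 10^{-5}$, well below $10^{-4}$.

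For the inductive step, the key structural fact is that the macro-nodes $G(p)$ partition the vertices of $G_{k+1}$: since distinct macro-nodes are node-disjoint, the events "$G(p)$ is a correct macro-node" (in the sense of Definition~\ref{def_macrocorr}) are mutually independent across $p \in G_k$. Setting $\mu' := P[G(p) \text{ is correct}]$, the random set $Corr'$ is therefore an i.i.d.\ sample in $G_k$ with density $\mu'$, so the inductive hypothesis applied at parameter $\mu'$ delivers $E[|Rel_k(Corr')|]/|G_k| = F_k(\mu')$. Moreover, a node $u \in G(p)$ belongs to $Rel_{k+1}(Corr)$ precisely when $u \in Rel_{CZP}(Corr(p))$ and $p \in Rel_k(Corr')$, so a union bound on the complementary events, averaged over $u \in G_{k+1}$, yields the additive recursion
\[
1 - F_{k+1}(\mu) \leq \bigl(1 - F_k(\mu')\bigr) + \bigl(1 - p_{CZP}(\mu)\bigr),
\]
where $p_{CZP}(\mu)$ is the expected reliable fraction of CZP on a single $N \times N$ grid.

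The critical quantitative input is that $\mu \geq 1 - 10^{-5}$ entails $\mu' \geq 1 - 10^{-5}$ with ample slack. A macro-node fails to be correct iff one of its four sides contains at most $7$ nodes of $Rel_{CZP}(Corr(p))$, so a union bound over the four sides combined with the boundary-aware CZP analysis from \cite{CtrZ} yields $1 - \mu' = O\bigl((1-\mu)^2\bigr)$. Unrolling the recursion with $\mu_0 := \mu$ and $\mu_{i+1} := p_{corr}(\mu_i)$ gives
\[
1 - F_k(\mu) \leq \bigl(1 - F_1(\mu_{k-1})\bigr) + \sum_{i=0}^{k-2}\bigl(1 - p_{CZP}(\mu_i)\bigr);
\]
because the sequence $1 - \mu_i$ contracts doubly-exponentially (squaring, up to a constant, at each step), the summands decay so fast that the whole sum stays far below $10^{-4}$ uniformly in $k$.

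The principal obstacle is the macro-node failure estimate: proving $1 - \mu' = O\bigl((1-\mu)^2\bigr)$ requires a boundary-aware analysis of $Rel_{CZP}$ that tracks not only the size of the reliable set but specifically how many reliable nodes land on each of the four sides, as a function of the positions (not merely the number) of Byzantine nodes. Securing a quadratic — and not merely linear — leading term in $(1-\mu)$ is exactly what makes the recursion close, so this is where the combinatorial details of CZP from \cite{CtrZ} have to be invoked with care.
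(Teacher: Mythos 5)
Your proposal is sound and reaches the right conclusion, but it takes a genuinely different route from the paper. The paper bundles the two per-level events into a single quantity $P(\mu)$ (the joint probability that the macro-node is correct \emph{and} that a random node lands in $Rel_{CZP}$), derives the multiplicative recursion $F_{k+1}(\mu) \geq P(\mu)\,F_k(P(\mu))$, and hence must control the infinite product $\prod_{i\geq 1}P^i(\mu)$; because $P$ absorbs the linear-order event ``the random node misses the reliable set,'' its iterates contract only geometrically ($1-P^i(\mu)\lesssim \gamma^i(1-\mu)$ with $\gamma\approx 1/2$), and the paper needs a convexity argument plus the Wallis/sinc product to bound the tail. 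You instead separate the two events, apply a union bound to get an additive recursion, and track the macro-node correctness probability $\mu'$ on its own; since the exhaustive check in \cite{CtrZ} shows that zero or one Byzantine node in a $10\times 10$ cluster never violates Definition~\ref{def_macrocorr}, you get $1-\mu'\leq\binom{100}{2}(1-\mu)^2$, the levels contract doubly-exponentially, and the sum $\sum_i(1-p_{CZP}(\mu_i))$ is dominated by its first term ($\lesssim 4\cdot 10^{-5}$), killing the uniformity-in-$k$ issue without any product manipulation. Your union bound also sidesteps the conditioning subtlety in the paper's step ``the first event occurs with probability $P_1$, and if so, the second with probability $P_2$,'' since the two events both depend on the configuration inside $G(p)$. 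Two small cautions: the statement $E[|Rel_k(Corr')|]/|G_k| = F_k(\mu')$ needs the observation that macro-node disjointness makes the indicators of ``$G(p)$ is correct'' genuinely i.i.d.\ (which you note), and your phrase ``union bound over the four sides combined with a boundary-aware CZP analysis'' somewhat obscures that the actual content of the quadratic bound is the finite verification that every single-Byzantine placement leaves all four sides with strictly more than $3N/4$ reliable nodes --- exactly the same exhaustive check over $100$ placements that the paper performs, so no new analysis of CZP is required beyond what \cite{CtrZ} already supplies.
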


\begin{proof}
Let there be an $N \times N$ grid where each node has the same probability $\mu_0$ to be correct. We call $P(\mu_0)$ the probability that the two following events occur:
\begin{enumerate}
\item The grid is \emph{correct} in the sense of Definition~\ref{def_macrocorr}.
\item A node, chosen uniformly at random, belongs to $Rel_{CZP}(Corr_0)$, $Corr_0$ being the distribution of correct nodes on the grid.
\end{enumerate}
We want to prove the following property by induction: $\displaystyle  F_k \geq \prod_{i=1}^{i=k} P^i(\mu)$, $P^i$ being the $i^{th}$ application of the function $P$. The property is true at rank $1$, as $F_1(\mu) \geq P(\mu)$.

Now, let us suppose that the property is true at rank $k$. Let $Corr$ be the distribution of correct nodes on $G_{k+1}$. Let $u$ be a randomly chosen node of $G_{k+1}$, and let $p$ be the node of $G_k$ such that $u$ belongs to the macro-node $G(p)$. According to Theorem~\ref{thmain}, to have $u \in Rel_{k+1}(Corr)$, it is necessary and sufficient that (1) $u \in Rel(p)$  and (2) $p \in Rel_k(Corr')$.
The first event occurs with probability $P_1 \geq P(\mu)$, and if so, the second event occurs with probability $P_2 \geq F_k(P(\mu))$.
Thus, $\displaystyle F_{k+1}(\mu) \geq P(\mu)F_k(P(\mu)) = \prod_{i=1}^{i=k+1} P^i(\mu)$: the property is true at rank $k+1$.
This is illustrated in Figure~\ref{fig:suff}.

\begin{figure*}
\begin{center}
\includegraphics[width=\textwidth]{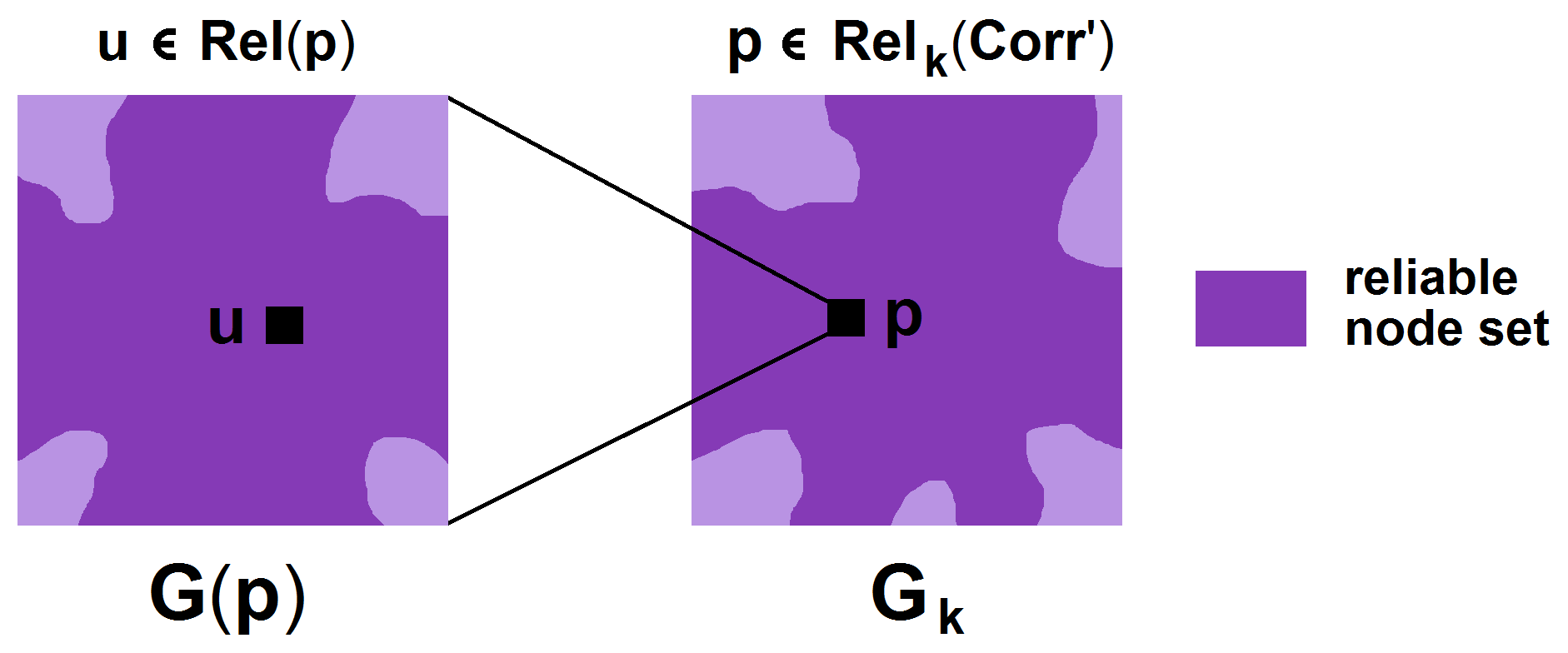}
\caption{Sufficient condition for $u$ to be in $Rel_{k+1}(Corr)$} 
\label{fig:suff}
\end{center}
\end{figure*}

\paragraph{}
Now, let us give a lower bound of $P(\mu_0)$. We consider two disjoint cases:

\begin{enumerate}

\item The case where all the nodes of the $N \times N$ grid are correct, which occurs with probability $\mu_0^{N^2}$. In this case, $Rel_{CZP}(Corr_0)$ covers the whole grid, and the grid is 
correct in the sense of Definition~\ref{def_macrocorr}.

\item The case where one single node is Byzantine, which occurs with probability 
$N^2(1-\mu_0) \mu_0^{N^2 - 1}$. As $N = 10$, we evaluate $Rel_{CZP}(Corr_0)$ for the $100$ possible placements of the single Byzantine node.
In $64$ cases, this set contains $99$ nodes.
In $32$ cases, it contains $98$ nodes.
In $4$ cases, it contains $96$ nodes.
Thus, the probability that a randomly chosen correct node belongs to this set is
$\displaystyle  \alpha = \frac{64 \times 99 + 32 \times 98 + 4 \times 96}{100 \times 99} \geq \frac{199}{200}$. In all cases, the grid is correct in the sense of Definition~\ref{def_macrocorr}. This is illustrated in Figure~\ref{fig:cases}.

\end{enumerate}

\begin{figure*}
\begin{center}
\includegraphics[width=\textwidth]{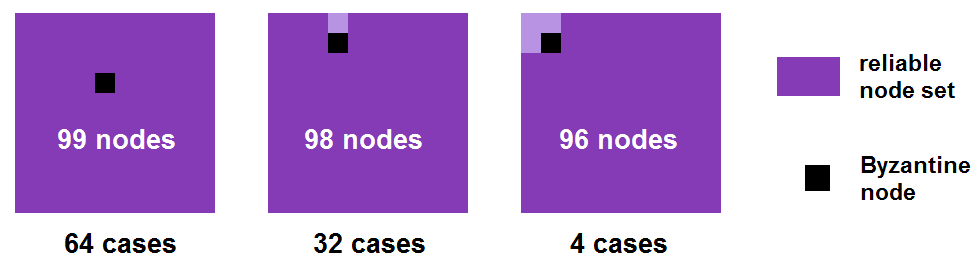}
\caption{Different cases for the placement of $1$ Byzantine node on an $N \times N$ grid} 
\label{fig:cases}
\end{center}
\end{figure*}

So $\displaystyle P(\mu) \geq g(\mu) = \mu^{N^2} + \alpha N^2(1-\mu) \mu^{N^2 - 1}$. This function is convex $\displaystyle \left( \frac{\partial^2 g(\mu)}{\partial \mu^2} \leq 0 \right)$ for $\mu \geq \alpha$.
Let $\beta = 1 - 10^{-5} \geq \alpha$.
Then, $\forall \mu \geq \beta$,
$g(\mu) \geq f(\gamma,\mu) = 1 - \gamma(1 - \mu)$,
with $\displaystyle \gamma = \frac{1-g(\beta)}{1-\beta}$.
Then, we easily show by induction that $\forall k \geq 1$, $P^k(\mu) \geq f(\gamma^k,\mu)$. So $\displaystyle F_{k}(\mu) \geq H_k(\mu) = \prod_{i=1}^{i=k} f(\gamma^i,\mu)$.

\paragraph{}

We now have a lower bound of $F_k(\mu)$, but it may be hard to calculate when $k$ approaches infinity. To overcome this difficulty, let $i_0$ be the first integer such that, $\forall i \geq i_0$, $\displaystyle \gamma^i \leq \frac{1}{i^2}$.
So $\displaystyle H_{k}(\mu) \geq \prod_{i=1}^{i=i_0} f(\gamma^i,\mu)
\prod_{i=i_0+1}^{i=k} (1 - \frac{1-\mu}{i^2})$.
Then, when $k$ approaches infinity, we can apply the Wallis formula: $\displaystyle \lim\limits_{x \to \infty} H_k(\mu) \geq
\prod_{i=1}^{i=i_0} f(\gamma^i,\mu)
\frac{sin(\pi \sqrt{1-\mu})}{\pi \sqrt{1-\mu}}
\geq
1 - 10^{-4}$
if $\mu \geq \beta$. Thus, the result, as $H_k(\mu)$ decreases with $k$.
\end{proof}

Therefore, we can hold a constant rate of Byzantine failures and yet have a constant expected fraction of reliable nodes, however large the grid may be. This may have important security applications -- for instance in a computationnal grid where each processor has a given probability to misbehave. This result shows that, for a given security requirement, we can increase the size of the grid indefinetely, which could be a solution to the problem of scalability.

\section{Conclusion}

In this paper, we have shown that Byzantine resilience was possible in a scalable degree-bounded network.
If the adversary can place the Byzantine nodes arbitrarily, then for the first time, we can tolerate a number of Byzantine failures that largely exceeds the node degree. If not (random distribution), then we can tolerate a constant fraction of Byzantine nodes, even if the size of the network approaches infinity.

We have the strong conviction that this approach (slice the network into clusters, then slice each cluster into smaller clusters, etc \dots) can be generalized to less regular topologies.
Indeed, the notion of a correct macro-node (see Definition~\ref{def_macrocorr}) can be generalized to an arbitrary graph -- the key idea is that, for each interface with another macro-node, we must still have a $3/4$ fraction of reliable nodes. Besides, the network diameter can only have discrete values here, but we could generalize the result to any network diameter.

\bibliographystyle{plain}
\bibliography{biblio}

\end{document}